\documentclass[A4paper,USenglish,11pt]{article}
\def\fancyguy{0}
\ifnum\fancyguy=0
\usepackage{comment}
\usepackage{ifthen}
\usepackage{xspace}
\usepackage{float}
\usepackage{subcaption}
\usepackage{amsmath}
\interdisplaylinepenalty=2500
\usepackage{amsthm}
\usepackage{amstext}
\usepackage{amssymb}
\usepackage{amsfonts}
\usepackage{bm}
\usepackage[multiple]{footmisc}
\usepackage{wrapfig}

\usepackage{url}





\usepackage[margin=1in]{geometry}
\usepackage{mathdots}
\usepackage{mathtools}
\usepackage{thmtools}
\usepackage{thm-restate}
\usepackage{tikz}
\usetikzlibrary{calc, fit, shapes,intersections,arrows,matrix, topaths}
\usetikzlibrary{decorations,decorations.pathreplacing,patterns, shadows}

\usepackage{graphicx}
\graphicspath{{./FIGS/}}
\usepackage[linesnumbered,noresetcount,vlined,ruled]{algorithm2e}

\usepackage[inline]{enumitem}
\usepackage{environ}
\usepackage{dsfont}

\usepackage{cite}
\usepackage{yfonts} 

\else

\usepackage{comment}
\usepackage{ifthen}
\usepackage{xspace}
\usepackage{float}
\usepackage{subcaption}
\usepackage{amsmath}
\interdisplaylinepenalty=2500
\usepackage{amsthm}
\usepackage{amstext}
\usepackage{amssymb}
\usepackage{amsfonts}
\usepackage{bm}
\usepackage[multiple]{footmisc}
\usepackage{wrapfig}
\usepackage[left=20mm, right=20mm, top=20mm, bottom=20mm]{geometry}
\usepackage[english]{babel}
\usepackage[utf8]{inputenc}
\usepackage{fancyhdr}
\usepackage{csquotes}


\usepackage{sectsty}
\usepackage{indentfirst}
\usepackage{url}
\allsectionsfont{\centering\mdseries\scshape}
\usepackage{mathdots}
\usepackage{mathtools}
\usepackage{thmtools}
\usepackage{thm-restate}
\usepackage{tikz}
\usetikzlibrary{calc, fit, shapes,intersections,arrows,matrix, topaths}
\usetikzlibrary{decorations,decorations.pathreplacing,patterns, shadows}

\usepackage{graphicx}
\graphicspath{{./FIGS/}}
\usepackage[linesnumbered,noresetcount,vlined,ruled]{algorithm2e}

\usepackage[inline]{enumitem}
\usepackage{environ}
\usepackage{dsfont}

\usepackage{cite}
\usepackage{yfonts} 
\usepackage{setspace}
\onehalfspacing
\fi

\newcommand{\Err}{\text{Err}}
\newenvironment{proofsketch}{%
  \proof}{\endproof}
\renewcommand{\epsilon}{\varepsilon} 

\renewcommand{\epsilon}{\varepsilon}
\newcommand{\eps}{\varepsilon}
\newcommand{\Dict}{\textsf{Dict}}
\newcommand{\MSDict}{\textsf{MS-Dict}}

\usepackage{amsmath} \usepackage{amsthm}
\newtheorem{theorem}{Theorem}

\newtheorem{corollary}[theorem]{Corollary}

\newtheorem{definition}[theorem]{Definition}
\newtheorem{invariant}[theorem]{Invariant}
\newtheorem{claim2}[theorem]{Claim}
\newtheorem{observation}[theorem]{Observation}

\newcommand{\UU}{\mathcal{U}}
\newcommand{\maxcard}{\log ^3 n}

\newcommand{\MM}{\mathcal{M}}

\newcommand{\hq}{h^q}
\newcommand{\hr}{h^r}

\newcommand{\hb}{h^b}

\newcommand{\etal}{\textit{et al.}\xspace}
\newcommand{\size}[1]{\ensuremath{\left|#1\right|}}
\newcommand{\set}[1]{\left\{ #1 \right\}}
\newcommand{\parentheses}[1]{\left(#1\right)}
\DeclarePairedDelimiter{\floor}{\lfloor}{\rfloor}
\DeclarePairedDelimiter{\ceil}{\lceil}{\rceil}
\newcommand{\expectation}[2]{\mathbb{E}_{#1}\left[ #2 \right]}
\renewcommand{\Pr}[1]{{\mathrm{Pr}}\left[ #1 \right]}
\newcommand{\NN}{\mathbb{N}}

\DeclareMathOperator{\cquery}{\textsf{count}}

\DeclareMathOperator{\ins}{\textsf{insert}}

\DeclareMathOperator{\del}{\textsf{delete}}

\DeclareMathOperator{\op}{\textsf{op}}
\DeclareMathOperator{\out}{\textsf{out}}

\DeclareMathOperator{\BD}{\textsf{BD}}
\DeclareMathOperator{\CD}{\textsf{CD}}

\DeclareMathOperator{\polylog}{\textsf{polylog}}
\DeclareMathOperator{\poly}{\textsf{poly}}

\makeatother

\newif\ifnotes
\notestrue
\ifnotes
\newcommand{\ioana}[1]{{\ifnotes \scriptsize \textcolor{red}{Ioana: {#1}} \fi}}
\newcommand{\guy}[1]{\ifnotes {\noindent \scriptsize  \textcolor{blue} {Guy: {#1}}} \fi{}}
\else
\newcommand{\ioana}[1]{}
\newcommand{\guy}[1]{}
\fi

\bibliographystyle{alpha}
\sloppy
\begin{document}
\title{A Space-Efficient Dynamic Dictionary for Multisets with Constant Time Operations
\thanks{This research was supported by a
grant from the United States-Israel Binational Science Foundation
(BSF), Jerusalem, Israel, and the United States National Science
Foundation (NSF)}
}
\author{
Ioana O. Bercea\thanks{
Tel Aviv University, Tel Aviv, Israel.
Email:~\texttt{ioana@cs.umd.edu, guy@eng.tau.ac.il}.}
\and
Guy Even\footnotemark[2]
}
\date{}
\maketitle

\begin{abstract} 
We consider the dynamic dictionary problem for multisets. 
Given an upper bound $n$ on the total cardinality
of the multiset (i.e., including multiplicities) at any point in time,
the goal is to design a data structure that supports multiplicity
queries and allows insertions and deletions to the multiset (i.e., the dynamic setting).
The data structure must be space-efficient (the space is $1+o(1)$ times the
information-theoretic lower bound) and support all operations in
constant time with high probability. 

In this paper, we present the first dynamic dictionary for multisets
that achieves these performance guarantees.
This answers an open problem of Arbitman, Naor and Segev~\cite{arbitman2010backyard}.
The previously best-known construction of Pagh, Pagh and Rao~\cite{DBLP:conf/soda/PaghPR05}
supports membership in constant time, multiplicity queries in $O(\log n)$ time in the worst
case, and insertions and deletions in constant expected amortized time. 
The main technical component of our solution is a strategy for efficiently storing variable-length binary counters using weighted balls-into-bins experiments in which
balls have logarithmic weights. 

We also obtain a counting filter that approximates multiplicity queries with 
a one sided error, using the reduction of Carter~\etal~\cite{carter1978exact}.
Counting filters have received significant attention over the years due to their applicability in practice.
We present the first counting filter with constant time operations.

\end{abstract}

\section{Introduction}
We consider the dynamic dictionary problem for multisets. 
The special case in which every element of the universe can appear at most once is a fundamental
problem in data structures and has been well studied~\cite{arbitman2010backyard,DBLP:conf/soda/PaghPR05,raman2003succinct, demaine2006dictionariis}.
In the case of multisets, elements can have arbitrary multiplicities and
we are given an upper bound $n$ on the total cardinality
of the multiset (i.e., including multiplicities) at any point in time.
The goal is to design a data structure that supports multiplicity
queries and allows insertions and deletions to the multiset (i.e., the dynamic setting).

A related problem is that of supporting \emph{approximate} membership 
and multiplicity queries. The classic approximate setting allows one-sided errors
in the form of false positives: given an error parameter $\eps>0$, the probability
of returning a ``yes'' on an element not in the set must be upper bounded by $\eps$.
Such data structures are known as \emph{filters}. 
For multisets, the corresponding data structure is known as a \emph{counting} filter.
A counting filter returns a count that is at least 
the multiplicity of the element in the multiset and overcounts with probability
bounded by $\eps$. 
Counting filters have received significant attention over the years due to their
applicability in practice~\cite{fan2000summary, cohen2003spectral, bonomi2006improved}.
One of the main applications of dictionaries for multisets is precisely
in designing counting filters. Namely, Carter~\etal~\cite{carter1978exact} showed
that by hashing each element into a random fingerprint,
one can reduce a counting filter to a dictionary for multisets by storing
the fingerprints in the dictionary.

For the design of both dictionaries and filters, the performance measures of interest are the space the data structure takes
and the time it takes to perform the operations. 
For dictionaries, we would like to get close to the 
lower bound of $\log{u+n \choose n} = n\log(u/n) + \Theta(n)$ bits,
where $u$ is the size of the universe.\footnote{All logarithms are base $2$ unless otherwise
stated. $\ln x$ is used to denote the natural
logarithm.}\footnote{ This equality holds when $u$ is
significantly larger than $n$.}
In the case of filters, the lower bound is at least $n\log(1/\eps) + \Theta(n)$
bits~\cite{lovett2010lower}.
A data structure is \emph{space-efficient} if the total number of bits
it requires is within $(1+o(1))$ of the lower bound, where the $o(1)$ term converges
to zero as $n$ tends to infinity. 
The goal is to design data structures that are space-efficient with high probability.\footnote{By with high probability (whp), we mean with
probability at least $1-1/n^{\Omega(1)}$. The constant in the exponent
can be controlled by the designer and only affects the $o(1)$ term
in the space of the dictionary or the filter.}
We would like to support queries, insertions and deletions in constant time in the word RAM model.
The constant time guarantees should be in the worst case with high probability (see~\cite{broder2001using,kirsch2007using, arbitman2009amortized,arbitman2010backyard}
for a  discussion on the shortcomings of expected or amortized performance in practical scenarios).
We assume that each memory access can read/write a word of $w=\log n$ contiguous bits.

The current best known dynamic dictionary for multisets was designed by Pagh, Pagh, Rao~\cite{DBLP:conf/soda/PaghPR05}
based on the dictionary for sets of Raman and Rao~\cite{raman2003succinct}.
The dictionary is space-efficient and supports membership queries in constant time in the worst case.
Insertions and deletions take amortized expected constant time and multiplicity queries
take $O(\log n)$ in the worst case. 
In the case of sets, the state-of-the-art dynamic dictionary of Arbitman, Naor and Segev~\cite{arbitman2010backyard} achieves the ``best of both worlds'': it is space-efficient and supports all operations in constant time whp. 
Arbitman~\etal~\cite{arbitman2010backyard} leave it as an open problem whether a similar
result can be achieved for multisets. 

Recently, progress on this problem was achieved by Bercea and Even~\cite{BE20} who designed
a constant-time dynamic space-efficient dictionary for \emph{random} multisets. 
In a random multiset, each element is sampled independently and uniformly at random from the universe.
In this paper, we build upon their work and present the first space-efficient dynamic dictionary
for (arbitrary) multisets with constant time operations in the worst case with high probability,
resolving the question of Arbitman~\etal~\cite{arbitman2010backyard} in the positive.
We also obtain a counting filter with similar guarantees.

\subsection{Results}
In the following theorem, we assume that the size of the universe $\UU$ is
polynomial in $n$.\footnote{This is justified by mapping $\UU$ to
$[\poly(n)]$ using $2$-independent hash
functions~\cite{demaine2006dictionariis}.}
\emph{Overflow} refers to the event that the space
allocated in advance for the dictionary does not suffice.

\begin{theorem}[dynamic multiset dictionary]\label{thm:dict}
There exists a dynamic dictionary that maintains a multiset of
cardinality at most $n$ from the universe $\UU=\set{0,1}^{\log_2 u}$
with the following guarantees: (1)~For every polynomial in $n$
sequence of operations (multiplicity query, insertion,
deletion), the dictionary does not overflow whp. (2)~If the
dictionary does not overflow, then every operation can be completed
in constant time. (3)~The required space is
$(1+o(1))\cdot n\log (u/n)+O(n)$ bits.
\end{theorem}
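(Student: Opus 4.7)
The plan is built on three ingredients: (i)~a quotienting-based hashing reduction that partitions the multiset into small bins, (ii)~a compact per-bin encoding built around variable-length binary counters whose lengths grow logarithmically with the multiplicity, and (iii)~a secondary ``overflow'' dictionary, in the spirit of the backyard construction of Arbitman~\etal~\cite{arbitman2010backyard}, that absorbs the heavy tail of bins that exceed their budget. I would start by reducing the arbitrary-universe case to $\UU=[\poly(n)]$ via $2$-independent hashing \cite{demaine2006dictionariis}, and then apply a pairwise (or $w$-wise) independent hash function $h:\UU\to [B]\times [R]$ with $B=\Theta(n/\log^2 n)$ and $R=u/B$. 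Following the quotient/remainder encoding, each element $x$ is represented in its bin $h_b(x)$ only by its remainder $h_r(x)$ and by a counter holding its current multiplicity.

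The key space-accounting idea is to allocate to an element of multiplicity $k$ a counter of $\Theta(\log k)$ bits, so its total footprint is $\log(u/B)+\Theta(\log k)$ bits. Summing over distinct elements $x_i$ with multiplicities $k_i$ and using $\sum_i k_i\le n$, the total footprint telescopes to $(1+o(1))\cdot n\log(u/n)+O(n)$ bits, matching the target up to the lower-order terms. For the per-bin data structure I would appeal to the constant-time bin machinery developed in \cite{BE20} (and to the underlying variable-length counter representation that is the paper's main technical contribution), which supports multiplicity query, insertion, and deletion in constant time on any bin whose total weight fits into $O(1)$ machine words.

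To justify that bins stay small, I would cast the distinct elements as weighted balls thrown into $B$ bins, where the ball for element $x_i$ has weight $\log(u/B)+\Theta(\log k_i)=O(\log n)$. A Chernoff-style tail bound for weighted balls-into-bins (the experiment advertised in the abstract) then yields that every bin carries weight $O(\log^2 n)$ whp, and that only a $1/\polylog(n)$ fraction of bins come within a constant factor of their allotment. Bins that breach their local budget would be evicted into a secondary dictionary; since overflow is rare, this secondary structure can be instantiated with a generic $O(n)$-bit dictionary (e.g., a space-efficient dictionary for sets) without damaging the leading-order space, while its own operations cost $O(1)$ time whp.

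The main obstacle I anticipate is the interaction between \emph{dynamic} counter lengths and the worst-case constant-time requirement: whenever $k_i$ crosses a power of two, the counter for $x_i$ must grow, potentially shifting the layout of its bin or pushing it over capacity; symmetric shrinking events arise on deletions. Handling these reshuffles locally in $O(1)$ time, and simultaneously proving that the weighted balls-into-bins random variables remain tightly concentrated under adversarial operation sequences (so that migrations to the overflow store are rare enough), is the technical heart of the construction and is where the weighted-balls analysis must be combined carefully with the variable-length counter invariants.
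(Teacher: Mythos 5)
Your high-level architecture (quotienting into bins, variable-length $\Theta(\log k)$-bit counters, an overflow store for bins that exceed their budget) matches the paper's dense-case construction, but the parametrization breaks both the time and the probability analysis, and the step you flag as the ``main obstacle'' is exactly what the paper's parameter choice dissolves. With $\Theta(n/\log^2 n)$ bins, each bin holds $\Theta(\log^2 n)$ elements and hence $\Omega(\log^2 n)$ bits --- it does \emph{not} fit in $O(1)$ machine words, so the constant-time ``bin machinery'' of \cite{BE20} (global lookup tables or word-packed Elias--Fano encodings) does not apply, and your worry about relayouts when a counter crosses a power of two becomes genuinely hard. The paper instead takes bins of expected occupancy $B=(\log n)/\log(u/n)$, so each bin dictionary \emph{and} its counter dictionary occupy $O(\log n)$ bits; a counter-length change is then handled by rewriting one word, which is why this is a non-issue there. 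A second missing idea is the cap on first-level multiplicities: the paper diverts every element of multiplicity at least $\log^3 n$ to the spare, so the weighted balls-into-bins experiment has ball weights bounded by $O(\log\log n)$ rather than your $O(\log n)$. This cap is essential --- with word-sized bins of capacity $\Theta(B)$ bits and balls of weight up to $\log n$, a single ball can fill a bin and no Chernoff bound helps; with weights $O(\log\log n)$ one gets that only $n/\log^5 n$ counter dictionaries fill up whp, hence at most $n/\log^3 n$ elements land in the spare, which is what lets the spare afford full $\log n$-bit counters in $o(n)$ total bits. Your ``$1/\polylog(n)$ fraction of bins'' claim is asserted, not derived, and without the multiplicity cap it is false.

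Three further gaps: (i) your space accounting charges $\log(u/B)=\log(u/n)+2\log\log n$ bits per element; in the dense case $\log(u/n)=O(\log\log n)$, so this is a constant-factor blowup, not $(1+o(1))$ --- you need a second quotienting inside each bin (the Elias--Fano header), or the paper's separate sparse-case reduction where the $\Theta(\log\log n)$ overhead is affordable. (ii) Pairwise or $w$-wise independent hashing does not support the whp tail bounds; the paper needs $k$-wise $\delta$-dependent permutations with $k=n^{\Theta(1)}$, combined with a Feistel partitioning of the universe, to make the analysis go through with $o(n)$-bit hash descriptions. (iii) You correctly identify but do not resolve the adversarial dynamic setting; the paper's resolution is an invariant guaranteeing that the spare's contents are a function of the \emph{current} multiset and the hash functions alone, which reduces overflow at any time $t$ to overflow under insertion-only arrival of $\MM_t$, followed by a union bound over the polynomially many time steps.
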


Our dictionary construction considers a natural separation into the \emph{sparse}
and \emph{dense} case based on the size of the universe relative to $n$.
The sparse case, defined when $\log(u/n) = \omega(\log\log n)$, presents
a more straightforward challenge for dictionary design because the dictionary construction can afford to store  additional $\Theta(\log\log n)$ bits per element without sacrificing space-efficiency.
In this case, the dictionary for multisets is based on a simple observation. Namely,
elements with multiplicity at most $\log^3 n$ can be stored in a space-efficient dictionary for sets
by attaching to each element a fixed-length counter of $3\log\log n$ bits (see Section~\ref{sec:sparse}).

The majority of the paper is focused on designing a dictionary for
multisets in the \emph{dense} case, in which
$\log(u/n) = O(\log\log n)$.\footnote{This case is especially relevant
in the approximate membership setting in which we have $u/n=1/\eps$
due to the reduction of Carter~\etal~\cite{carter1978exact}. In this
setting, the dense case arises in applications in which $n$ is large
and the error probability $\eps$ is a constant (say $\eps=1\%$).}
Following~\cite{BE20}, we hash distinct elements into a first level
that consists of small space-efficient ``bin dictionaries'' of fixed
capacity.  
The first level only stores elements of multiplicity strictly smaller than
$\log^3 n$, just like in the dense case. 
However, we employ variable-length counters to encode multiplicities
and store them in a separate structure called a ``counter dictionary''. 
We allocate one counter dictionary per each bin dictionary.
The capacity of a counter dictionary is an upper bound on the total length of the counters 
it stores and is linear in the capacity of the associated bin dictionary.

Elements that do not fit in the first level are stored in a
secondary data structure called the spare.
The spare is small enough that it can allocate $\log n$ bit 
counters for the elements it stores.
To bound the number of elements that are stored in the ``spare'',
we cast the process of hashing counters into counter dictionaries as a
weighted balls-into-bins experiment in which balls have logarithmic weights
(see Sec.~\ref{sec:analysis}).

\medskip\noindent
As a corollary of Thm.~\ref{thm:dict}, we obtain a counting filter with the following
guarantees.\footnote{Note that we allow $\eps$ to be as small as $n/|\UU|$ (below this threshold, simply use a dictionary).}
\begin{corollary}[dynamic counting filter]\label{cor:filter}
There exists a dynamic counting filter for multisets of cardinality
at most $n$ from a universe $\UU=\set{0,1}^u$ such that the
following hold: (1)~For every polynomial in $n$ sequence of
operations (multiplicity query, insertion, deletion), the filter
does not overflow whp. (2)~If the filter does not overflow, then
every operation can be completed in constant time. (3)~The required
space is $(1+o(1))\cdot \log (1/\eps)\cdot n +O(n)$ bits. (4)~For
every count query, the probability of overcounting is bounded by $\eps$.
\end{corollary}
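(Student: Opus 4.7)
The plan is to derive the counting filter directly from Theorem~\ref{thm:dict} via the standard reduction of Carter~\etal~\cite{carter1978exact}, so the heavy technical work is already done and the proof amounts to verifying the parameters.

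First, I would instantiate a pairwise-independent hash family $h: \UU \to [n/\eps]$ (such as an affine hash over a prime field of size $\poly(n/\eps)$) with the property that for any two distinct $x,y \in \UU$, $\Pr{h(x)=h(y)} \leq \eps/n$. The representation of $h$ takes $O(u + \log(n/\eps))$ bits, which is absorbed in the $O(n)$ additive term (or at worst in the $o(1)\cdot n\log(1/\eps)$ term). The counting filter is implemented by storing the multiset of fingerprints $\{h(x) : x \in S\}$ (with multiplicities) in the dynamic multiset dictionary of Theorem~\ref{thm:dict}, instantiated on the universe $\UU' = [n/\eps]$ with capacity bound $n$. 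An insertion/deletion/query of $x$ is translated into an insertion/deletion/multiplicity-query of $h(x)$ in the underlying dictionary.

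Now I would check each item of the corollary. For (1) and (2), the fingerprints form a multiset of cardinality at most $n$ over a universe of size $n/\eps$, so no-overflow whp and worst-case constant-time operations follow immediately from the corresponding guarantees of Theorem~\ref{thm:dict}, together with the fact that one evaluation of $h$ takes constant time in the word RAM model. For (3), the space of the underlying dictionary is
\[
(1+o(1))\cdot n\log\!\left(\tfrac{n/\eps}{n}\right) + O(n) \;=\; (1+o(1))\cdot n\log(1/\eps) + O(n),
\]
which matches the stated bound once the $O(u + \log(n/\eps))$ bits for storing $h$ are absorbed.

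For (4), fix a query element $x$ with true multiplicity $m(x)$ in $S$. The filter returns the multiplicity of $h(x)$ in the fingerprint multiset, which equals $m(x) + \sum_{y \in S,\, y\neq x} m(y)\cdot \ind{h(y)=h(x)}$, so it never undercounts. Overcounting occurs only if some distinct $y\in S$ satisfies $h(y)=h(x)$. There are at most $n$ distinct such $y$, and for each one $\Pr{h(y)=h(x)}\le \eps/n$ by pairwise independence; a union bound yields overcounting probability at most $\eps$, as required. The main (and really only) technical concern is ensuring that the hash function's representation and evaluation cost fit within the space and time budgets; this is resolved by noting that the assumption $\eps \geq n/|\UU|$ keeps $\log(1/\eps) = O(u)$, so $h$ needs only $O(u)$ bits and constant-time arithmetic on $O(1)$ machine words.
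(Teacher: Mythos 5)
Your proposal is correct and follows exactly the same route as the paper: the Carter~\etal\ reduction via a pairwise-independent fingerprint hash $h:\UU\to[n/\eps]$, storing the induced fingerprint multiset in the dictionary of Theorem~\ref{thm:dict}. The paper's own proof in Section~\ref{sec:filter} is essentially a one-paragraph statement of this reduction; your version just fills in the collision/union-bound calculation and the space accounting explicitly.
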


\subsection{Related Work}
The dictionary for multisets of Pagh~\etal~\cite{DBLP:conf/soda/PaghPR05} is space-efficient
and supports membership queries in constant time in the worst case. Insertions and deletions take
amortized expected constant time and multiplicity queries take $O(\log c)$ for a multiplicity of $c$.
Multiplicities are represented ``implicitly'' by a binary counter whose operations (query, increment, decrement)
are simulated as queries and updates to dictionaries on sets.\footnote{To be more exact,
for each bit of the counter, the construction in Pagh~\etal~\cite{DBLP:conf/soda/PaghPR05} allocates a dictionary on sets such that
the value of the bit can be retrieved by performing a lookup in the dictionary. Updating a bit of the
counter is done by inserting or deleting elements in the associated dictionary.}
Increments and decrements to the counter take $O(1)$ bit probes (and hence
$O(1)$ dictionary operations) but decoding the multiplicity
takes $O(\log n)$ time in the worst case. 
We are not aware of any other dictionary constructions for multisets.\footnote{Data structures for predecessor and successor queries such as~\cite{patrascu2014dynamic} can support multisets
but they do not meet the required performance guarantees
for the special case of (just) supporting multiplicity queries.}

Dynamic dictionaries for sets have been extensively studied~\cite{dietzfelbinger1990new, dalal2005two,
demaine2006dictionariis,raman2003succinct, fotakis2005space,panigrahy2005efficient,dietzfelbinger2007balanced,
arbitman2009amortized, arbitman2010backyard}. 
The dynamic dictionary for sets of Arbitman~\etal~\cite{arbitman2010backyard} is
space-efficient and supports operations in constant time whp.
Their construction cannot be generalized in a straightforward manner
to handle multisets. Specifically, their dictionary maintains
a spare of size $\Omega\parentheses{\frac{\log\log n}{(\log n )^{1/3}}\cdot n}$ elements and hence,
cannot store counters of length $\log n$ per element. 
In contrast, the spare in our construction is guaranteed to store at most $3n/(\log^3 n)$ elements whp.

In terms of counting filters, several constructions  do
not come with worst case guarantees for storing arbitrary
multisets~\cite{fan2000summary,
bonomi2006improved}.
The only previous counting filter with worst case guarantees we are
aware of is the Spectral Bloom filter of Cohen and
Matias~\cite{cohen2003spectral} (with over $450$ citations in Google
Scholar). The construction is a generalization of the Bloom filter and
hence requires $\Theta(\log(1/\eps))$ memory accesses per
operation. The space usage is similar to that of a Bloom filter and
depends on the sum of logs of multiplicities. Consequently, when the
multiset is a set, the leading constant is $1.44$, and hence Spectral
Bloom Filters are not space-efficient in general.

\subsection{Paper Organization}
Preliminaries are in Sec.~\ref{sec:prelim}. The construction
for the sparse case can be found in Sec.~\ref{sec:sparse}
and the one for the dense case is described and analyzed in Sec.~\ref{sec:dense}.
Section~\ref{sec:hash} describes how our analysis works without the assumption
of access to truly random hash functions. Corollary~\ref{cor:filter} is
proved in Sec.~\ref{sec:filter}. Appendix~\ref{app:bin} reviews standard
implementation techniques.

\section{Preliminaries}\label{sec:prelim}
\subsection{Notation and Definitions}
For $k>0$, let $[k]$ denote the
set $\set{0,\ldots,\ceil{k}-1}$.  For a string $a \in \set{0,1}^*$,
let $\size{a}$ denote the length of $a$ in bits.  We often abuse
notation, and regard elements in $[k]$ as binary strings of length
${\log k}$.  Let $\UU=\set{0,1}^{\log u}$ denote the universe of all
possible elements.

\begin{definition}[multiset]
A \emph{multiset} $\MM$ over $\UU$ is a function $\MM:\UU\rightarrow
\NN$. We refer to $\MM(x)$ as the \emph{multiplicity} of $x$.

\medskip\noindent The \emph{cardinality} of a multiset $\MM$ is
denoted by $\size{\MM}$ and defined by
$\size{\MM}\triangleq \sum_{x\in \UU} \MM(x)$.  The \emph{support} of
the multiset is denoted by $\sigma(\MM)$ and is defined by
$\sigma(\MM)\triangleq \set{x \mid \MM(x)> 0}$.
\end{definition}

\medskip\noindent \textbf{Operations over Dynamic Multisets.}  We
consider the following operations: $\ins(x)$, $\del(x)$, and $\cquery(x)$.  Let
$\MM_t$ denote the multiset after $t$ operations. A \emph{dynamic
multiset} $\set{\MM_t}_t$ is specified by a sequence $\set{\op_t}_{t\geq 1}$
of as follows.\footnote{We require that $\op_t=\del(x_t)$ only if
$\MM_{t-1}(x_t)>0$, i.e. if $x$ is not in the multiset, then a
delete operation does not make its multiplicity negative.}
\begin{align*}
\MM_t(x)&\triangleq
\begin{cases}
0 & \text{if $t=0$}\\
\MM_{t-1}(x)+1 &\text{if $\op_t=\ins(x)$}\\
\MM_{t-1}(x)-1 &\text{if $\op_t=\del(x)$}\\
\MM_{t-1}(x)& \text{otherwise.}
\end{cases}
\end{align*}
\medskip\noindent We say that a dynamic multiset $\set{\MM_t}_t$ has
cardinality at most $n$ if $\size{\MM_t}\leq n$, for every $t$.

\medskip\noindent \textbf{Dynamic Dictionary for Multisets.}  A
\emph{dynamic dictionary for multisets} maintains a dynamic multiset
$\set{\MM_t}_t$.  The response to $\cquery(x)$ is simply $\MM_t(x)$.

\medskip\noindent \textbf{Dynamic Counting Filter.}  A \emph{dynamic
counting filter} maintains a dynamic multiset $\set{\MM_t}_t$ and is
parameterized by an error parameter $\eps\in (0,1)$.  Let $\out_t$
denote the response to a $\cquery(x_t)$ at time $t$.  We require that
the output $\out_t$ satisfy the following conditions:
\begin{align}
\out_t \geq \MM_t(x_t)\\
\Pr{\out_t > \MM_t(x_t)} & \leq \eps\;.
\end{align}
Namely, $\out_t$ is an approximation of $\MM_t(x_t)$ with a
one-sided error.  

\begin{definition}[overcounting]
Let $\Err_t$ denote the event that $\op_t=\cquery(x_t)$, and
$\out_t>\MM_t(x_t)$.
\end{definition}
Note that overcounting generalizes false positive events in filters over
sets. Indeed, a false positive event occurs if $\MM_t(x_t)=0$ and
$\out_t>0$.\footnote{ The
probability space is induced only by the random choices (i.e.,
choice of hash functions) that the filter makes. Note also that if
$\op_t=\op_{t'}=\cquery(x)$,
then the events $\Err_{t}$ and $\Err_{t'}$ need not be independent.
}

\subsection{The Model}
\medskip\noindent \textbf{Memory Access Model.} We assume that the
data structures are implemented in the RAM model in which the basic
unit of one memory access is a word. Let $w$ denote the memory word
length in bits. We assume that $w=\Theta(\log n)$.  See
Appendix~\ref{app:bin} for a discussion on how the computations we perform
over words are implemented in constant time.

\medskip\noindent \textbf{Success Probability.}  We prove that
overflow occurs with probability at most $1/\poly(n)$ and that one can
control the degree of the polynomial (the degree of the polynomial
only affects the $o(1)$ term in the size bound).  The probability of
an overflow depends only on the random choices that the dictionary makes.

\medskip\noindent \textbf{Hash Functions.}
Our dictionary uses the succinct hash functions of Arbitman~\etal~\cite{arbitman2010backyard}
which have a small representation and can be evaluated in constant time.
For simplicity, we first analyze the data structure assuming fully random
hash functions (Sec.~\ref{sec:analysis}).  In Sec.~\ref{sec:hash}, we prove
that the same arguments hold when we use succinct hash functions.
The filter reduction additionally employs pairwise independent hash functions.

\section{Dictionary for Multisets via Dictionary+Retrieval (Sparse Case)}\label{sec:sparse}

In this section, we show how to design a multiset dictionary using any
dictionary on sets that supports attaching satellite data of
$O(\log n)$ bits per element.  Such a dictionary with satellite data
supports the operations: query, insert, delete, retrieve, and update.
A retrieve operation for $x$ returns the satellite data of $x$.  An
update operation for $x$ with new satellite data $d$ stores $d$ as the
new satellite data of $x$.  The reduction incurs a penalty of
$\Theta(\log\log n)$ extra bits per element. Hence, a space-efficient
multiset dictionary is obtained from a space-efficient dictionary only
if $\log(u/n)=\omega(\log\log n)$.

Let $\Dict(n,r)$ denote a dynamic dictionary for sets of cardinality
at most $n$, where $r$ bits of satellite data are attached to each
element.
Let $\MSDict(n)$ denote a dynamic dictionary for multisets of
cardinality at most $n$.

\medskip\noindent
The reduction is summarized in the following observation.
\begin{observation}\label{obs:sparse}
One can implement $\MSDict(n)$ using two dynamic dictionaries:
$D_1=\Dict(n,3\log\log n)$ and $D_2=\Dict(n/(\log^3 n), \log n)$.
Each operation over $\MSDict$ can be performed using a constant number
of operations over $D_1$ and $D_2$.
\end{observation}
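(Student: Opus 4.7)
The plan is to describe the reduction explicitly and verify (a) that each dictionary stays within its promised capacity and satellite-data budget, and (b) that each $\MSDict$ operation is simulated by a constant number of operations on $D_1$ and $D_2$.

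First I would set up the invariant maintained by the reduction: for every $x \in \UU$ with $\MM(x) > 0$, $x$ is stored in exactly one of $D_1, D_2$. Specifically, $x$ is stored in $D_1$ with satellite data equal to $\MM(x)$ if $1 \leq \MM(x) < \log^3 n$, and in $D_2$ with satellite data equal to $\MM(x)$ if $\MM(x) \geq \log^3 n$. This is well-defined because $\log^3 n$ fits in $3\log\log n$ bits of satellite data in $D_1$, while any multiplicity up to $n$ fits in $\log n$ bits of satellite data in $D_2$. The capacity of $D_1$ is $n$, which is an upper bound on $\size{\sigma(\MM)}$, and the capacity of $D_2$ is $n / \log^3 n$, which is an upper bound on the number of distinct elements of multiplicity at least $\log^3 n$ (by the pigeonhole bound $\sum_x \MM(x) \leq n$).

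Next I would describe the simulation of each operation. A $\cquery(x)$ performs a query on $D_1$ and, if absent, on $D_2$; retrieval of the satellite data then yields $\MM(x)$ (or $0$ if $x$ is in neither). For $\ins(x)$, one queries $D_1$ and $D_2$ to determine $\MM(x)$: if $x \notin D_1 \cup D_2$, insert $x$ into $D_1$ with counter $1$; if $x \in D_1$ with counter $c$, then either update its satellite data to $c+1$ (when $c + 1 < \log^3 n$) or delete $x$ from $D_1$ and insert it into $D_2$ with counter $c+1$ (when $c + 1 = \log^3 n$); if $x \in D_2$ with counter $c$, update its satellite data to $c+1$. The operation $\del(x)$ is symmetric: decrement in place in $D_2$, and when crossing the threshold $\log^3 n$ downward, migrate $x$ from $D_2$ to $D_1$; in $D_1$, either decrement the satellite data or delete $x$ outright if the counter reaches $0$. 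Each of these case splits uses $O(1)$ query/retrieve/update/insert/delete operations on $D_1$ and $D_2$.

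Finally, I would verify that the invariant is preserved after each case and that the capacity bounds are never violated: the only way to add an element to $D_2$ is via a migration triggered when an element's multiplicity crosses $\log^3 n$ upward, so every element present in $D_2$ truly has multiplicity at least $\log^3 n$, keeping $D_2$'s load at most $n/\log^3 n$. The step that requires a bit of care is the boundary handling at the threshold $\log^3 n$: one must ensure that the representation of the counter $\log^3 n$ itself is routed to $D_2$ (not $D_1$), and that the migration direction upon $\ins$ and $\del$ is triggered at the exact same threshold so the invariant is consistent over time. This is the main (and only) obstacle, and it is purely bookkeeping.
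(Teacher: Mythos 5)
Your proposal is correct and follows essentially the same light/heavy splitting strategy as the paper's proof sketch, just with the operation simulation and invariant maintenance spelled out in more detail. One small point of divergence: the paper's sketch uses $\log^2 n$ as the light/heavy threshold, whereas your threshold of $\log^3 n$ is actually the one consistent with the stated capacity $n/\log^3 n$ of $D_2$ and the $3\log\log n$-bit counters of $D_1$, so your choice is the right one.
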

\begin{proofsketch}
An element is \emph{light} if its multiplicity is at most $\log^2 n$,
otherwise it is \emph{heavy}.  Dictionary $D_1$ is used for storing
the light elements, whereas dictionary $D_2$ is used for storing the
heavy elements. The satellite data in both dictionaries is a binary
counter of the multiplicity. 
\end{proofsketch}

\begin{claim2}
If $\log(u/n)=\omega(\log \log n)$, then there exists a dynamic
multiset dictionary that is space-efficient and supports operations
in constant time in the worst case whp.
\end{claim2}
\begin{proof}
A space-efficient implementation of $\Dict(n,r)$ (for $r=O(\log n)$)
with constant time per operation can be obtained from the dictionary
of Arbitman~\etal~\cite{arbitman2010backyard} (see
also~\cite{BE20}). The space of such a dictionary is
$(1+o(1))\cdot (\log (u/n)+r) \cdot n + O(n)$ bits.  Instantiating
this space for $D_1$ and $D_2$ from Observation~\ref{obs:sparse}
yields a multiset dictionary $\MSDict(n)$ with space:
$(1+o(1))\cdot ((\log(u/n)+3\log \log n)\cdot n + O(n)$.  In the
sparse case $\log(u/n)=\omega(\log \log n)$, and hence the obtained
$\MSDict(n)$ is space efficient.
\end{proof}

\medskip\noindent
This completes the proof of Theorem 1 for the sparse case.

\medskip\noindent \textbf{Remark.} An alternative solution stores the
multiplicities in an array separately from a dictionary that stores
the support of the multiset.  Let $s$ denote the cardinality of the
support of the multiset.  Let $h:\UU\rightarrow [s+o(s)]$ be a dynamic
perfect hashing that requires $\Theta(s\log\log s)$ bits and supports
operations in constant time (such as the one
in~\cite{demaine2006dictionariis}).  Store the (variable-length)
binary counter for $x$ at index $h(x)$ in the array.  The array can be
implemented in space that is linear in the total length of the
counters and supports query and update operations in constant
time~\cite{blandford2008compact}.

\section{Dictionary for Multisets (Dense Case)}\label{sec:dense}
In this section,  we  prove Theorem~\ref{thm:dict} for the case
in which $\log(u/n) = O(\log\log n)$, which we call the \emph{dense} case.
We refer to this dictionary construction as the \emph{MS-Dictionary} (Multiset Dictionary) in the dense case.

The MS-Dictionary construction follows the same general structure as
in~\cite{arbitman2010backyard,demaine2006dictionariis,BE20}.
Specifically, it consists of two levels of dictionaries.  The first
level is designed to store the majority of the elements
(Sec.~\ref{sec:firstlevel}).  An element is stored in the first level
provided that its multiplicity is at most $\maxcard$ and
there is enough capacity.  Otherwise, the element is stored in the
second level, which is called the \emph{spare} (Sec.~\ref{sec:spare}).

The first level of the MS-Dictionary consists of $m$ \emph{bin
dictionaries} $\set{\BD_i}_{i\in[m]}$ together with $m$
\emph{counter dictionaries} $\set{\CD_i}_{i\in [m]}$.  Each bin
dictionary can store at most $n_B = (1+\delta)B$ distinct elements,
where $\delta=o(1)$ and $B\triangleq n/m$ denotes the mean occupancy
of each bin dictionary.  We say that a bin dictionary is \emph{full}
if it stores $n_B$ elements in it.

Each counter dictionary stores variable-length binary counters.  Each
counter represents the multiplicity of an element in the associated
bin dictionary.  Each counter dictionary can store counters whose
total length in bits is at most $6B$. We say that a counter dictionary
is \emph{full} if the total length of the counters stored in it is
$6B$ bits.

Elements with high multiplicity or whose $\BD$ or $\CD$ are full are
stored in the spare, as formulated in the following invariant:
\begin{invariant}\label{inv:first}
An element $x$ such that $\MM_t(x)>0$ is stored in the spare at time
$t$ if: (1)~$\MM_t(x)\geq\maxcard$, (2)~the bin dictionary
corresponding to $x$ is full, or (3)~the counter dictionary
corresponding to $x$ is full.
\end{invariant}

We denote the upper bound on the cardinality of the support of the
multiset stored in the spare by $n_S$ (the
value of $n_s$ is specified later).  We say that the spare \emph{overflows}
when more than $n_S$ elements are stored in it.

\subsection{Hash Functions}
We employ a permutation $\pi:\UU \rightarrow \UU$ .  We define
$\hb:\UU\rightarrow [m]$ to be the leftmost $\log m$ bits of the
binary representation of $\pi(x)$ and by $\hr:\UU\rightarrow [u/m]$ to
be the remaining $\log(u/m)$ bits of $x$.  An element $x$ is hashed to
the bin dictionary of index $\hb(x)$.  Hence storing $x$ in the first
level of the dictionary amounts to storing $\hr(x)$ in $\BD_i$, where
$i=\hb(x)$, and storing $\MM_t(x)$ in $\CD_i$. (This reduction in the universe size is often called
``quotienting''~\cite{Knuth, pagh2001low,
DBLP:conf/soda/PaghPR05,demaine2006dictionariis}).

The overflow analysis in Sec.~\ref{sec:analysis} assume truly random permutations.
In Sec.~\ref{sec:hash}, we discuss how one can replace this assumption
with the succinct hash functions of Arbitman~\etal~\cite{arbitman2010backyard}.

\subsection{The First Level of the Dictionary}\label{sec:firstlevel}
We follow the same parametrization as in~\cite{BE20}. Namely, we set
the average occupancy of a bin dictionary to be
$B \triangleq (\log n)/\log(u/n)$ and  set $\delta \triangleq \Theta(\frac{\log\log n}{\sqrt{B}})$.

\medskip\noindent
\textbf{Bin Dictionaries.}
Each bin dictionary ($\BD$) is a deterministic dictionary for sets of
cardinality at most $n_B$ that supports queries, insertions and
deletions. The implementation of a bin dictionary using global lookup
tables~\cite{arbitman2010backyard} or Elias-Fano encoding~\cite{BE20}
is briefly reviewed in Appendix~\ref{app:bin}.  We remark that each
$\BD$ is space-efficient, meaning it requires
$n_B \cdot \log(u/n) + O(n_B)$ bits.  Moreover, each $\BD$ fits in a
constant number of words and performs queries, insertions and
deletions in constant time.

\medskip\noindent
\textbf{Counter Dictionaries.}
Each counter dictionary $\CD_i$ stores a vector of multiplicities of
the elements stored in the corresponding bin dictionary $\BD_i$.  The
order of the multiplicities stored in $\CD_i$ is the same order in
which the corresponding elements are stored in $\BD_i$.
Multiplicities in $\CD_i$ are stored by variable-length counters. We
employ a trivial $2$-bit alphabet to encode $0,1$ and
``end-of-counter'' symbols for encoding the multiplicities. Hence, the
length of a counter $c$ is $\ceil{\log_2 c}$ bits and its encoding
$2(1+\ceil{\log_2 c})$ bits long. The contents of $\CD_i$ is simply a
concatenation of the encoding of the counters.  We allocate
$2(6B+n_B)$ bits per $\CD$.\footnote{Note, however, that we define a
$\CD$ to be full if the sum of counter lengths is $6B$ (even if we
did not use all its space).  The justification for this definition
is to simplify the analysis.}

The $\CD$ supports the operations of multiplicity query, increment and
decrement.  These operations are carried out naturally in
constant time because each $\CD_i$ fits in a word.  We note that an
increment may cause the $\CD$ to be full, in which case $x$ is deleted
from the bin dictionary and is inserted to the spare together with its
updated counter.  Similarly, a decrement may zero the counter, in
which case $x$ is deleted from the bin dictionary (and hence its
multiplicity is also deleted from the counter dictionary).

\subsection{The Spare}\label{sec:spare}
The spare is a high performance space-inefficient dictionary for multisets.
It stores at most $n_S = O(n/\log^3 n)$ distinct elements. 
Each element stored in the spare can have a multiplicity as high as $n$.  
It supports all the operations of the dictionary in constant time. 
In addition, the spare also moves elements back to the first level if their insertion no longer violates Invariant~\ref{inv:first}. 

We propose to implement the spare using the dynamic dictionary of
Arbitman~\etal~\cite{arbitman2009amortized} in which we append
$\log n$-bit counters to each element. We briefly review the
construction here.  The dictionary is a de-amortized construction of
the cuckoo hash table of Pagh and Rodler~\cite{pagh2001cuckoo}.
Namely, each element is assigned two locations in an array. If upon
insertion, both locations are occupied, then space for the new element
is made by ``relocating'' an element occupying one of the two
locations.  Long chains of relocations are ``postponed'' by employing a
queue of pending insertions.  Thus, each operation is guaranteed to
perform in constant time in the worst case.  The space that the
dictionary occupies is $O(n_S\log(u/n))+O(n)$.  The counters increase
the space of the spare by $O(n_S\log n) = o(n)$ bits.

The construction in~\cite{arbitman2009amortized} is used as a spare in
the space-efficient dynamic filter in~\cite{arbitman2010backyard}.  We
use it a similar manner to maintain Invariant~\ref{inv:first} in a
``lazy'' fashion. Namely, if an element $x$ residing in the spare is
no longer in violation of Invariant~\ref{inv:first} (for instance, due
to a deletion in the bin dictionary), we do not immediately move $x$
from the spare back to its bin dictionary.  Instead, we ``delay'' such
an operation until $x$ is examined during a chain of relocations.
Specifically, during an insertion to the spare, for each evicted
element, one checks if this element is still in violation of
Invariant~\ref{inv:first}.  If it is not, then it is deleted from the
spare and inserted into the first level.  This increases the time it
takes to perform an insertion to the spare only by a constant.
Moreover, it does not affect the overflow probability of the spare.

\subsection{Overflow Analysis }\label{sec:analysis}
The event of an overflow occurs if more than $n_S$ distinct elements
are stored in the spare. In this section, we prove that overflow does
not occur whp with respect to perfectly random hash
functions.  In Sec.~\ref{sec:hash}, we discuss how this analysis can
be modified when we employ succinct hash functions.

The analysis proceeds in two stages. First, we consider the
incremental setting (in which elements of the multiset are inserted
one-by-one and there are no deletions). We prove that overflow does
not occur whp if $n_S = 3n/\log^3(n)$. The proof for the dynamic
setting (deletions and insertions) is based on
Invariant~\ref{inv:first}. Namely, Invariant~\ref{inv:first} reduces
the dynamic setting to an incremental setting. Formally, the
probability of overflow at time $t$ (after a sequence of deletions and
insertions) equals the probability of an overflow had the elements of
$\MM_t$ been inserted one-by-one (no deletions).  Hence, overflow does
not occur whp over a polynomial number of operations in the dynamic
setting by applying a union bound.

Recall that each component of the first level of the dictionary has
capacity parameters: each bin dictionary has an upper bound of
$n_B = (1+\delta)B$ on the number of distinct elements it stores and
each counter dictionary has an upper bound of $6B$ on the total length
of the counters it stores.  Additionally, the first level only stores
elements whose multiplicity is strictly smaller than $\maxcard$.
According to Invariant~\ref{inv:first}, if the insertion of some
element $x$ exceeds these bounds, then $x$ is moved to the spare.

We bound the number of elements that go to the spare due to failing
one of the conditions of Invariant~\ref{inv:first} separately. The
number of elements whose multiplicity is at least $\maxcard$ is at
most $n/\maxcard$.  The number of distinct elements that are stored in
the spare because their bin dictionary is full is at most $n/\log^3 n$
whp. The proof of this bound can be derived by modifying the proof of Claim~\ref{claim:cdovf} (see also ~\cite{arbitman2010backyard}).
We focus on the number of
distinct elements whose counter dictionary is full.

\begin{claim2}\label{claim:cdovf}
The number of distinct elements whose corresponding $\CD$ is full is at most $n/\log^3 n$ whp.
\end{claim2}
\begin{proof}
Recall that there are $m = n/B$ counter dictionaries and that each
$\CD$ stores the multiplicities of at most $n_B = (1+\delta)B$
distinct elements of multiplicity strictly smaller than
$\maxcard$. In a full $\CD$, the sum of the counter lengths reaches
$6B$. We start by bounding the probability that the total length of
the counters in a $\CD$ is at least $6B$.

Formally, consider a multiset $\MM$ of cardinality $n$ consisting of
$s$ distinct elements $\set{x_i}_{i\in [s]}$ with multiplicities
$\set{f_i}_{i\in [s]}$ (note that $\sum_{i\in [s]} f_i =n$).  The
length of the counter for multiplicity $f_i$ is
$w_i\triangleq \ceil{\log(f_i+1)}$ (we refer to this quantity as
\emph{weight}).  For $\beta\in [m]$, let $\MM^\beta$ denote the
sub-multiset of $\MM$ consisting of the elements $x_i$ such that
$h^b(x_i)=\beta$.  Let $C_\beta$ denote the event that the weight of
$\MM^\beta$ is at least $6B$, namely
$\sum_{x_i\in\MM^\beta} w_i \geq 6B$. We begin by bounding the
probability of event $C_\beta$ occurring.

For $i\in [s]$, define the random variable $X_i\in \set{0,w_i}$,
where $X_i=w_i$ if $\hb(x_i)=\beta$ and $0$ otherwise.  Since the
values $\set{(\hb(x_i),\hq(x_i))}_i$ were sampled at random
without replacement (i.e., obtained from a random permutation),
the random variables $\set{X_i}_i$ are negatively associated.  Let
$\mu\triangleq \frac{1}{m} \cdot \sum_{i\in [s]} w_i$ denote the
expected weight per $\CD$.  Clearly, $\mu\leq \frac{n}{m} = B$.  We
now scale the RVs so that they are in the range $[0,1]$.  Since
the multiplicities of elements in the first level is strictly
smaller than $\maxcard$, we have that $w_i \leq \log \maxcard$ (we
omit the ceiling to improve readability). We then define
$\tilde{X_i}\triangleq X_i/\log \maxcard$ and
$\tilde{\mu}=\mu/\log \maxcard\leq B/\log \maxcard$. Then, by
Chernoff's bound:

\begin{align*}
\Pr{C_\beta} &= \Pr{\sum_{i\in [s]} X_i \geq 6B} \\
       &=\Pr{\sum_{i\in [s]} \tilde{X}_i \geq \frac{6B}{\log\maxcard}} \\
       &\leq 2^{-\frac{6B}{\log\maxcard}}\\
       &=1/(\log n)^{\omega(1)} \;.
\end{align*}

Let $I(C_\beta)$ denote the indicator variable for event $C_\beta$. Then $\expectation{}{\sum_\beta I(C_\beta)} \leq n/(\log n)^{\omega(1)}$.
Moreover, the RVs
$\set{I(C_\beta)}_\beta$ are negatively associated (more weight in bin $b$
implies less weight in bin $b'$). By Chernoff's bound:
\begin{align*}
\Pr{\sum_{b} I(C_\beta) \geq \frac{n}{\log^5 n}} \leq O(2^{- n/(\log^5 n)})\;.  
\end{align*}
Whp, a bin is assigned at most $\log^2 n$ elements.  We conclude that
the number of elements that are stored in the spare due to events
$\bigcup_{b} C_\beta$ is at most $n/(\log^3 n)$ whp.
\end{proof}

\subsection{Space Analysis}
Each bin dictionary takes $n_B\log(u/n) + \Theta(n_B)$ bits, where $n_B = (1+\delta)B$, $B = n/m$ and $\delta=o(1)$.
Each $\CD$ occupies $\Theta(B)$ bits. 
Therefore, the first level of the MS-Dictionary takes $(1+\delta)n\log(u/n) + \Theta(n)$ bits. The spare takes $O(n_S\log(u/n))=o(n)$ bits, since
$n_S = \Theta(n/\log^3 n)$. Therefore, the space the whole dictionary takes is $(1+o(1))\cdot \log(u/n)+\Theta(n)$ bits. This completes the proof of Theorem~\ref{thm:dict} for the dense case.

\section{Succinct Hash Functions}\label{sec:hash}
In this section, we discuss how to replace the assumption of truly random permutations 
with succinct hash functions (i.e., representation requires
$o(n)$ bits) that have constant evaluation time in the RAM model. 

We follow the construction in~\cite{arbitman2010backyard}, which we
describe as follows.  Partition the universe into $M = n^{9/10}$ parts
using a one-round Feistel permutation (described below) such that the number of elements
in each part is at most $n^{1/10}+n^{3/40}$ whp.
The permutation
uses highly independent hash
functions~\cite{siegel2004universal,dietzfelbinger2009applications}.
Apply the dictionary construction separately in each part with an upper
bound of $n^{9/10}+n^{3/40}$ on the cardinality of the set.  Within
each part, the dictionary employs a $k$-wise $\delta$-dependent
permutation.  A collection $\Pi$ of permutations
$\pi:\UU\rightarrow \UU$ is $k$-wise $\delta$-dependent if for any
distinct elements $x_1,\ldots, x_k \in \UU$, the distribution on
$(\pi(x_1),\ldots, \pi(x_k))$ induced by sampling $\pi \in \Pi$ is
$\delta$-close in statistical distance to the distribution induced by
a truly random permutation.
Arbitman~\etal~\cite{arbitman2010backyard} show how one can obtain
succinct $k$-wise $\delta$-dependent permutations that can be evaluated
in constant time by combining the constructions
in~\cite{naor1999construction,kaplan2009derandomized}.  Setting
$k=n^{1/10}+n^{3/40}$ and $\delta = 1/n^{\Theta(1)}$ ensures that the
bound on the size of the spare holds whp in each part and hence, by
union bound, in all parts simultaneously.

To complete the proof, we need to prove that the partitioning is
``balanced'' whp also with respect to multisets.
(Recall, that the cardinality of a multiset equals the sum of
multiplicities of the elements in the support of the multiset.)
Formally, we prove that the pseudo random partition induces
in each part a multiset of cardinality at most
$n^{1/10}+n^{3/40}\log^{3/2}n$ whp. As ``heavy'' elements of
multiplicity at least $\maxcard$ are stored in the spare, we may
assume that multiplicities are less that $\maxcard$. 

We first describe how the partitioning is achieved in~\cite{arbitman2010backyard}. The binary representation of $x$
is partitioned into the leftmost $\log M$ bits, denoted by $x_L$ and the remaining bits,
denoted by $x_R$. A $k'$-wise independent hash function $f:\set{0,1}^{\log(u/M)}\rightarrow\set{0,1}^{\log M}$
is then sampled, with $k'= \floor{n^{1/20}/(e^{1/3})}$. The permutation $\pi$ is defined as $\pi(x) = (x_L\oplus f(x_R),x_R)$. 

Note that this
induces a view of the universe as a two-dimensional table with $u/M$
rows (corresponding to each $x_R$ value) and $M$ columns
(corresponding to each $x_L\oplus f(x_R)$ value).  
Indeed, each cell of the table has at most one element
(i.e., if
$x=(x_L, x_R)$ and $y=(y_L,y_R)$ satisfy
$x_L\oplus f(x_R) = y_L \oplus f(y_R)$ and $x_R = y_R$, then $x=y$).
We define a \emph{part} of the input multiset as consisting of all the elements of the input multiset 
that belong to the same column. The index of the part that $x$ is assigned to is $x_L\oplus f(x_R)$.
The corresponding part stores $x_R$.

The following observation follows from ~\cite[Claim 5.4]{arbitman2010backyard}
and the fact that the maximum multiplicity of each element is strictly less than $\maxcard$.

\begin{observation}
The cardinality of every part of the multiset is at most
$n^{1/10}+n^{3/40}\log^{3/2}n$ whp.
\end{observation}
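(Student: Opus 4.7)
The plan is to adapt the proof of \cite[Claim 5.4]{arbitman2010backyard} to the multiset setting, the only change being that the indicators of the events ``element $x_i$ lands in part $\beta$'' are now weighted by multiplicities $w_i = \MM(x_i) < \maxcard$. Fix a part $\beta \in [M]$, decompose each element as $x_i = (x_{iL}, x_{iR})$, and let $Y_i \triangleq \ind{x_{iL} \oplus f(x_{iR}) = \beta}$. The cardinality of part $\beta$ is $C_\beta = \sum_i w_i Y_i$, with expectation $\mu_\beta = \sum_i w_i / M \leq n/M = n^{1/10}$.

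First, I would regroup the sum by rows of the induced table: two elements sharing the same $x_R$ coordinate land in the same part only if their $x_L$ coordinates also coincide, so at most one element per row ends up in $\beta$. Thus $C_\beta = \sum_r W_r$, where $r$ ranges over distinct $x_R$ values, each $W_r \in [0, \maxcard)$, and the collection $\set{W_r}_r$ is $k'$-wise independent (with $k' = \floor{n^{1/20}/e^{1/3}}$) because distinct $W_r$ depend on disjoint evaluations of the $k'$-wise independent hash function $f$.

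Next, I would invoke the $k'$-th central moment tail inequality for sums of $k$-wise independent bounded random variables. Compared to the unweighted analysis in \cite[Claim 5.4]{arbitman2010backyard} (where each summand lies in $\set{0,1}$), the $W_r$ now live in an interval of length $\maxcard$, which inflates the variance of $C_\beta$ by a factor of at most $\maxcard$ and therefore scales the permissible deviation by $\sqrt{\maxcard} = \log^{3/2} n$. This matches the extra factor in the statement and yields
\begin{align*}
\Pr{C_\beta - \mu_\beta \geq n^{3/40} \log^{3/2} n} \leq n^{-\omega(1)},
\end{align*}
where the super-polynomial decay comes from $k'/2 = \Theta(n^{1/20})$. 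The main obstacle is the calibration of constants: with these parameters, the leading ``variance term'' $k' \maxcard \mu_\beta$ in the moment bound is of the same order $n^{3/20} \log^3 n$ as the squared deviation, so one must rely on the specific constant $e^{1/3}$ hidden in $k'$ to keep the ratio strictly below $1$ before raising to the $k'/2$-th power.

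Finally, a union bound over the $M = n^{9/10}$ parts inflates the failure probability by only a polynomial factor in $n$, yielding the claimed cardinality bound simultaneously for every part whp.
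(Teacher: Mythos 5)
Your proposal is correct and follows essentially the same route as the paper: group the weighted indicators by rows of the induced table (so each summand is the multiplicity of the at most one element per row landing in the part, bounded by $\maxcard$), observe $k'$-wise independence because distinct rows use disjoint evaluations of $f$, rescale by $\maxcard$, and apply the Chernoff-type bound for $k'$-wise independent random variables of Schmidt et al., finishing with a union bound over the $M$ parts. Your remark about the calibration of $k'=\floor{n^{1/20}/e^{1/3}}$ against the term $\delta^2\tilde{\mu}=n^{1/20}$ is accurate and in fact spells out a detail the paper leaves implicit.
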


\begin{proof} 
Fix a part $j \in [M]$ and for each $i\in[u/M]$, let $\MM_i$ denote the multiset of all
elements $x$  with the $x_R$ value equal to $i$
(i.e., the  multisets $\MM_i$ consist of all the elements in row $i$).
Each multiset $\MM_i$ contributes at most one distinct element to the multiset of part $j$.
Define $X_i \in [\maxcard]$ to be the random variable that denotes the multiplicity of the element from $\MM_i$ that is mapped to part $j$.
Then $\expectation{}{X_i} =\frac{1}{M}\ \sum_{x\in \UU} \MM_i(x)$. 
Now define $X=\sum_{i\in[u/m]} X_i$ to be the random variable that denotes the cardinality of the multiset
that is mapped into part $j$. By linearity of expectation, $\expectation{}{X} = n/M= n^{1/10}$.
The random variables $\set{X_i}_i$ are $k'$-wise independent, since each variable $X_i$
is determined by a different row in the table (and hence, each $\set{X_i}_i$ depends on a different $x_R$ value). We scale the RVs $\set{X_i}_i$ by $\maxcard$ and then apply Chernoff's bound for $k'$-wise independent RVs~\cite{schmidt1995chernoff} and obtain:
\begin{align*}
\Pr{\frac{X}{\maxcard} \geq \parentheses{1+\frac{\log^{3/2}n}{n^{1/40}}}\cdot \frac{n^{9/10}}{\maxcard}} &\leq \exp(-\floor{k'/2}) = \exp(-\Omega(n^{1/20}))\;.
\end{align*}
The claim follows.
\end{proof}

\section{The Counting Filter}\label{sec:filter}
To obtain a counting filter from our dictionary for multisets, use a
pairwise independent hash function $h:\UU\rightarrow [n/\eps]$ to map
an element $x$ to a fingerprint $h(x)$\cite{carter1978exact}.  Let
$\MM_h$ denote the multiset over $[n/\eps]$ induced by a multiset
$\MM$ over $\UU$ defined by
$\MM_h(y) \triangleq \sum_{x:h(x)=y} \MM(x)$.  A multiset dictionary
for $\MM_h$ constitutes a counting filter in which the
probability of an overcount is at most $\eps$.  The counting filter
is requires $(1+o(1))\log(1/\eps)n + O(n)$ bits and performs all
operations in constant time. This completes the proof of
Corollary~\ref{cor:filter}.

\bibliography{main}

\newcommand{\etalchar}[1]{$^{#1}$}
\begin{thebibliography}{DadHPP06}

\bibitem[ANS09]{arbitman2009amortized}
Yuriy Arbitman, Moni Naor, and Gil Segev.
\newblock De-amortized cuckoo hashing: Provable worst-case performance and
  experimental results.
\newblock In {\em International Colloquium on Automata, Languages, and
  Programming}, pages 107--118. Springer, 2009.

\bibitem[ANS10]{arbitman2010backyard}
Yuriy Arbitman, Moni Naor, and Gil Segev.
\newblock Backyard cuckoo hashing: Constant worst-case operations with a
  succinct representation.
\newblock In {\em 2010 IEEE 51st Annual Symposium on Foundations of Computer
  Science}, pages 787--796. IEEE, 2010.

\bibitem[BB08]{blandford2008compact}
Daniel~K Blandford and Guy~E Blelloch.
\newblock Compact dictionaries for variable-length keys and data with
  applications.
\newblock {\em ACM Transactions on Algorithms (TALG)}, 4(2):1--25, 2008.

\bibitem[BE20]{BE20}
Ioana~Oriana Bercea and Guy Even.
\newblock A dynamic space-efficient filter with constant time operations.
\newblock {\em CoRR}, abs/2005.01098, 2020.
\newblock to appear in SWAT 2020.

\bibitem[BM01]{broder2001using}
Andrei Broder and Michael Mitzenmacher.
\newblock Using multiple hash functions to improve ip lookups.
\newblock In {\em Proceedings IEEE INFOCOM 2001. Conference on Computer
  Communications. Twentieth Annual Joint Conference of the IEEE Computer and
  Communications Society (Cat. No. 01CH37213)}, volume~3, pages 1454--1463.
  IEEE, 2001.

\bibitem[BMP{\etalchar{+}}06]{bonomi2006improved}
Flavio Bonomi, Michael Mitzenmacher, Rina Panigrahy, Sushil Singh, and George
  Varghese.
\newblock An improved construction for counting {Bloom} filters.
\newblock In {\em European Symposium on Algorithms}, pages 684--695. Springer,
  2006.

\bibitem[CFG{\etalchar{+}}78]{carter1978exact}
Larry Carter, Robert Floyd, John Gill, George Markowsky, and Mark Wegman.
\newblock Exact and approximate membership testers.
\newblock In {\em Proceedings of the tenth annual ACM symposium on Theory of
  computing}, pages 59--65. ACM, 1978.

\bibitem[CM03]{cohen2003spectral}
Saar Cohen and Yossi Matias.
\newblock Spectral {Bloom} filters.
\newblock In {\em Proceedings of the 2003 ACM SIGMOD international conference
  on Management of data}, pages 241--252, 2003.

\bibitem[DadH90]{dietzfelbinger1990new}
Martin Dietzfelbinger and Friedhelm~Meyer auf~der Heide.
\newblock A new universal class of hash functions and dynamic hashing in real
  time.
\newblock In {\em International Colloquium on Automata, Languages, and
  Programming}, pages 6--19. Springer, 1990.

\bibitem[DadHPP06]{demaine2006dictionariis}
Erik~D Demaine, Friedhelm~Meyer auf~der Heide, Rasmus Pagh, and Mihai P{\u
  a}tra{\c{s}}cu.
\newblock De dictionariis dynamicis pauco spatio utentibus.
\newblock In {\em Latin American Symposium on Theoretical Informatics}, pages
  349--361. Springer, 2006.

\bibitem[DDMM05]{dalal2005two}
Ketan Dalal, Luc Devroye, Ebrahim Malalla, and Erin McLeish.
\newblock Two-way chaining with reassignment.
\newblock {\em SIAM Journal on Computing}, 35(2):327--340, 2005.

\bibitem[DR09]{dietzfelbinger2009applications}
Martin Dietzfelbinger and Michael Rink.
\newblock Applications of a splitting trick.
\newblock In {\em International Colloquium on Automata, Languages, and
  Programming}, pages 354--365. Springer, 2009.

\bibitem[DW07]{dietzfelbinger2007balanced}
Martin Dietzfelbinger and Christoph Weidling.
\newblock Balanced allocation and dictionaries with tightly packed constant
  size bins.
\newblock {\em Theoretical Computer Science}, 380(1-2):47--68, 2007.

\bibitem[Eli74]{elias1974efficient}
Peter Elias.
\newblock Efficient storage and retrieval by content and address of static
  files.
\newblock {\em Journal of the ACM (JACM)}, 21(2):246--260, 1974.

\bibitem[FCAB00]{fan2000summary}
Li~Fan, Pei Cao, Jussara Almeida, and Andrei~Z Broder.
\newblock Summary cache: a scalable wide-area web cache sharing protocol.
\newblock {\em IEEE/ACM transactions on networking}, 8(3):281--293, 2000.

\bibitem[FPSS05]{fotakis2005space}
Dimitris Fotakis, Rasmus Pagh, Peter Sanders, and Paul Spirakis.
\newblock Space efficient hash tables with worst case constant access time.
\newblock {\em Theory of Computing Systems}, 38(2):229--248, 2005.

\bibitem[KM07]{kirsch2007using}
Adam Kirsch and Michael Mitzenmacher.
\newblock Using a queue to de-amortize cuckoo hashing in hardware.
\newblock In {\em Proceedings of the Forty-Fifth Annual Allerton Conference on
  Communication, Control, and Computing}, volume~75, 2007.

\bibitem[KNR09]{kaplan2009derandomized}
Eyal Kaplan, Moni Naor, and Omer Reingold.
\newblock Derandomized constructions of k-wise (almost) independent
  permutations.
\newblock {\em Algorithmica}, 55(1):113--133, 2009.

\bibitem[Knu73]{Knuth}
Donald~E Knuth.
\newblock The art of computer programming, vol. 3: Searching and sorting.
\newblock {\em Reading MA: Addison-Wisley}, 1973.

\bibitem[LP10]{lovett2010lower}
Shachar Lovett and Ely Porat.
\newblock A lower bound for dynamic approximate membership data structures.
\newblock In {\em 2010 IEEE 51st Annual Symposium on Foundations of Computer
  Science}, pages 797--804. IEEE, 2010.

\bibitem[NR99]{naor1999construction}
Moni Naor and Omer Reingold.
\newblock On the construction of pseudorandom permutations: Luby—rackoff
  revisited.
\newblock {\em Journal of Cryptology}, 12(1):29--66, 1999.

\bibitem[Pag01]{pagh2001low}
Rasmus Pagh.
\newblock Low redundancy in static dictionaries with constant query time.
\newblock {\em SIAM Journal on Computing}, 31(2):353--363, 2001.

\bibitem[Pan05]{panigrahy2005efficient}
Rina Panigrahy.
\newblock Efficient hashing with lookups in two memory accesses.
\newblock In {\em Proceedings of the sixteenth annual ACM-SIAM symposium on
  Discrete algorithms}, pages 830--839. Society for Industrial and Applied
  Mathematics, 2005.

\bibitem[PPR05]{DBLP:conf/soda/PaghPR05}
Anna Pagh, Rasmus Pagh, and S.~Srinivasa Rao.
\newblock An optimal {Bloom} filter replacement.
\newblock In {\em {SODA}}, pages 823--829. {SIAM}, 2005.

\bibitem[PR01]{pagh2001cuckoo}
Rasmus Pagh and Flemming~Friche Rodler.
\newblock Cuckoo hashing.
\newblock In {\em European Symposium on Algorithms}, pages 121--133. Springer,
  2001.

\bibitem[PT14]{patrascu2014dynamic}
Mihai P{\u a}tra{\c s}cu and Mikkel Thorup.
\newblock Dynamic integer sets with optimal rank, select, and predecessor
  search.
\newblock In {\em 2014 IEEE 55th Annual Symposium on Foundations of Computer
  Science}, pages 166--175. IEEE, 2014.

\bibitem[RR03]{raman2003succinct}
Rajeev Raman and Satti~Srinivasa Rao.
\newblock Succinct dynamic dictionaries and trees.
\newblock In {\em International Colloquium on Automata, Languages, and
  Programming}, pages 357--368. Springer, 2003.

\bibitem[Sie04]{siegel2004universal}
Alan Siegel.
\newblock On universal classes of extremely random constant-time hash
  functions.
\newblock {\em SIAM Journal on Computing}, 33(3):505--543, 2004.

\bibitem[SSS95]{schmidt1995chernoff}
Jeanette~P Schmidt, Alan Siegel, and Aravind Srinivasan.
\newblock Chernoff--hoeffding bounds for applications with limited
  independence.
\newblock {\em SIAM Journal on Discrete Mathematics}, 8(2):223--250, 1995.

\end{thebibliography}

\appendix

\section{Implementation of the First Level of the Dictionary}\label{app:bin}
In this section, we discuss two implementations of the first
level of the dictionary that meet the specifications from Sec.~\ref{sec:firstlevel}.
Namely, that of using global lookup tables like it was suggested in~\cite{arbitman2010backyard} or an Elias-Fano encoding~\cite{elias1974efficient}. 
We briefly review them here (for details, see~\cite{BE20}).

\subsection{Global Lookup Tables}
In this implementation, all bin and counter dictionaries employ
common global lookup tables. Hence, it is sufficient to show that
the size of the tables is $o(n)$. Each bin dictionary stores at most $n_B=(1+\delta)B$ distinct elements
from a universe $\UU'$ of size $u'=u/(n/B)$. Therefore, the
total number of states of a bin dictionary is $s\triangleq\binom{u'}{n_B}$.
Each operation on the bin dictionary ( query, insert, delete) 
is implemented as a function from $s\times u'$ to $s$., Namely, given  the current
state of the dictionary and an element $x\in \UU'$, each function
returns an updated state (in the case of $\ins(x)$ and $\del(x)$) or a bit (in the case
of a membership query). 
The global lookup tables explicitly represent these functions and can be built in advance.
Operations are therefore supported in constant time. 

Moreover, each table requires at most $s\cdot u' \cdot \log s$ bits. 
Recall that we are in the sparse case defined as the case in which the size
of the universe is small relative to $n$. 
Specifically, we have that $\log (u/n) = O(\log\log n)$, hence $u = n\polylog(n)$. 
Since $B=(\log n)/(\log(u/n))$ and $n_B = (1+\delta)B$, one can show that,
under these parametrizations, $s = \polylog(n)\cdot \sqrt{n}$ and the total
number of bits each table takes is $o(n)$. 

Similarly, we can build a lookup table
that encodes the lexicographic order of the elements in each state of a $\BD$.
Each operation on the counter dictionaries is implemented by
further indexing the lookup tables with an index that denotes the position
of $c(x)$ in the $CD$.

\subsection{Elias-Fano encoding} In this section, we briefly discuss the Elias-Fano
encoding proposed in~\cite{BE20}. A bin dictionary implemented using
this encoding is referred to as a ``pocket dictionary''.
The idea is to represent each element in the universe $[u']$ as a pair $(q,r)$, where
$q\in [B]$ (the quotient) and $r\in[u'/B]$ (the remainder). 
A header encodes in unary the number of elements that have the same quotient.
The body is the concatenation of remainders. The space required is $B+n_B(1+\log (u/n))$ bits, which meets the required space bound since $B=O(n_B)$.
Similarly, a counter dictionary can be implemented by storing
the counters consecutively using an ``end-of-string'' symbol.
We use a ternary alphabet for this encoding, which requires at 
most $\Theta(B)$ bits to encode each $\CD$.

Both the $BD$s and the $CD$s fit in $O(1)$ words.
Operations in the $\BD$ and the $\CD$ require rank and select
instructions. See~\cite{BE20} for a discussion of how these operations
can be executed in constant time if the RAM model can evaluate in constant time
instructions represented as Boolean circuits with $O(\log w)$ depth and $O(w^2)$ gates.

\end{document}